\pgfplotsset{compat=1.16}
\definecolor{myblue}{rgb}{0.00000,0.44700,0.74100}%
\definecolor{mygreen}{rgb}{0.46600,0.67400,0.18800}%
\definecolor{amber}{rgb}{1.0, 0.75, 0.0}
\definecolor{lightgray}{rgb}{0.83, 0.83, 0.83}
\definecolor{darkgray}{rgb}{0.4, 0.4, 0.4}
\definecolor{myred}{rgb}{0.77, 0.12, 0.23}
\newtheorem{assumption}{Assumption}
\crefname{assumption}{Assumption}{Assumptions}
\newtheorem{proposition}{Proposition}
\crefname{proposition}{Proposition}{Propositions}
\newtheorem{corollary}{Corollary}
\crefname{corollary}{Corollary}{Corollaries}
\renewcommand{\t}{\text{t}}
\renewcommand{\r}{\text{r}}
\newcommand{\pa}[1]{\left(#1\right)}
\newcommand{\abs}[1]{\left|#1\right|}
\newcommand{\acc}[1]{\left\{#1\right\}}
\newcommand{\argmax}{\text{argmax}}
\newif 
\def\BibTeX{{\rm B\kern-.05em{\sc i\kern-.025em b}\kern-.08em
    T\kern-.1667em\lower.7ex\hbox{E}\kern-.125emX}}
\begin{document}

\title{Performance Bounds for Near-Field Multi-Antenna Range Estimation of Extended Targets }

\author{\IEEEauthorblockN{Guillaume Thiran, François De Saint Moulin, Claude Oestges, Luc Vandendorpe\thanks{Guillaume Thiran is a Research Fellow of the Fonds de la Recherche Scientifique - FNRS.}\thanks{The authors thank Laurence Defraigne for fruitful discussions on \Cref{sec:amb_funct}.}}\\
\IEEEauthorblockA{ICTEAM, UCLouvain - Louvain-la-Neuve, Belgium\\}
\{guillaume.thiran,francois.desaintmoulin,claude.oestges,luc.vandendorpe\}@uclouvain.be}

\maketitle

\thispagestyle{empty}
\pagestyle{empty}



\begin{abstract}
    In this paper, performance bounds for the multi-antenna near-field range estimation of extended targets are provided. First, analytic expressions of the ambiguity functions are obtained, emphasising the cooperation between the waveform delay and the near-field  phase shift information. The impact of estimating the range of an extended target with a point target model is analysed, showing that a model mismatch leads to severe performance degradation in the near-field region. Secondly, Cramér-Rao bounds are derived. Expressions emphasising the impact of various parameters are obtained, these parameters including the carrier frequency, the central frequency of the waveform, and its root-mean-square bandwidth. The near-field range information is shown to depend on the root-mean-square value of the propagation delay derivatives, this value scaling with the fourth power of the ratio between the antenna array dimension and the target range.
\end{abstract}

\begin{IEEEkeywords}
Near-field, range detection, extended targets, ambiguity function, Cramér-Rao bound
\end{IEEEkeywords}


\section{Introduction}
\label{sec:introduction}

Wireless communication models usually consider Far-Field (FF) propagation, which amounts to approximating the spherical wavefront by a plane one at large distances. However, emerging wireless networks target higher carrier frequencies, lower propagation distances, and larger antenna arrays. This implies that the wavefront curvature must be accounted for with Near-Field (NF) propagation models. This NF effect can then be leveraged, providing significant gains in positioning applications.  

The NF effect can be exploited when an active point source is located in the NF region of an antenna array. In that case, the array can perform source positioning based on the received signals \cite{begriche2014exact}. As no control is exerted on the signal, this latter is either modelled as a parametric or stochastic narrowband signal \cite{el2010conditional}, with performance depending on the quantities measured by the receiver \cite{chen2023cramer}. The NF effect can also be leveraged in radar applications, in which signals are reflected by passive targets. Performance bounds are derived in \cite{hua2023near} for 3D localisation. OFDM signals are exploited in \cite{sakhnini2022near}, the localisation performance being demonstrated experimentally and evaluated with the Cramér-Rao Bound (CRB). Continuous waveforms are studied in \cite{wang2024cramer}, in which the associated CRB is computed, and the parameter impact emphasised. 

The above works consider dimensionless Point Targets (PTs), which re-emit the impinging signals isotropically. Yet, when large targets are close to the arrays, such a model might be too simplistic, as noted in \cite{sakhnini2022near}. Such large targets are named Extended Targets (ETs), and call for specific signal processing techniques. In \cite{moulin2024near}, we have designed an electromagnetic-based propagation model and the associated range estimator. A qualitative performance evaluation has been carried out numerically, calling for the associated theoretical bounds.

Therefore, this paper provides analytic performance bounds for the range estimation of ETs. On the one hand, closed-form ambiguity function expressions are provided, highlighting the cooperation between the NF phase shifts and the classical waveform delay information. Moreover, the impact of a model mismatch between the signal reflected by an ET, and the range detection through a PT model is analysed. It is shown that such a mismatch leads to a severe performance degradation. On the other hand, the CRB on the range estimation is derived. The obtained expression reveals the impact of the parameters, namely the carrier frequency, and the central frequency and Root-Mean-Square (RMS) bandwidth of the waveform. The NF effect is shown to depend on the RMS value of the propagation distance derivatives, scaling with the fourth power of the ratio between the array dimension and the target range. Analysing the CRB, an expression of the range at which the NF effect becomes insignificant with respect to the waveform information is obtained. 

Below, the system model is described in \Cref{sec:system_model}, while the ambiguity function expressions are provided in \Cref{sec:amb_funct}, and the CRBs in \Cref{sec:CRB}.


\section{System Model}
\label{sec:system_model}


\begin{figure}
    \centering
    \includegraphics{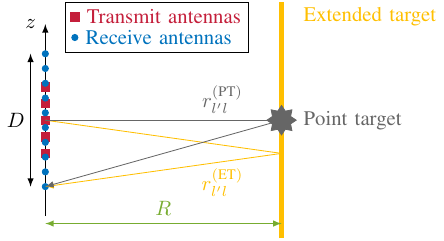}
    \vspace{-0.3cm}
    \caption{System model}
    \vspace{-0.3cm}
    \label{fig:sys_model}    
\end{figure}
The scenario of \Cref{fig:sys_model} is considered, in which a large metallic reflector is located in front of two linear antenna arrays of respectively $N_{\t}$ and $N_{\r}$ antennas, whose maximum dimension is denoted $D$. The reflector is parallel to the arrays and located at a distance $R$, named the target range. The $z$-axis position of the $l$th transmit and receive antennas is respectively denoted $ z_{l}$ and $y_{l}$. In order to estimate the range $R$, each transmit antenna emits the same waveform $s\pa{t}$ on orthogonal time resources, in a Time Division Multiple Access (TDMA) manner. TDMA indeed leads to an easier implementation than systems in which orthogonality is achieved in other domains, such as frequencies or codes \cite{xu2021transmit}. The baseband waveform is characterised by an autocorrelation
function $C\pa{\tau}\triangleq \int s\pa{t}s^*\pa{t-\tau}\mathrm{d}t$, its energy being denoted $\mathcal{E}_c \triangleq C\pa{0}$. Its frequency content ranges from $-B/2$ to $B/2$ with $B$ the bandwidth, and its central frequency and RMS bandwidth are denoted and defined as 
\begin{align}
    f_M &\triangleq \frac{1}{\mathcal{E}_c} \int f \abs{S\pa{f}}^2 \mathrm{d}f,
    \label{eq:fM}\\
    B_{\text{RMS}} &\triangleq  \sqrt{\frac{1}{\mathcal{E}_c} \int \pa{f-f_M}^2 \abs{S\pa{f}}^2 \mathrm{d}f},
    \label{eq:BRMS}
\end{align}
with $S\pa{f}$ the Fourier transform of $s(t)$. At the $l'$th receive antenna, $N_{\t}$ noisy signals are received, the $l$th one reading as 
\begin{align}
    &u_{l'l}\pa{t} = \xi \mu_{l'l;R}\pa{t} + w_{l'l}\pa{t},    \label{eq:received_signal}\\
    &\text{with} \quad \mu_{l'l;R}\pa{t} \triangleq e^{-jkr_{l'l}} s\pa{t- \frac{r_{l'l}}{c}}.
    \label{eq:model}
\end{align}
In the above, $\xi$ is an unknown complex coefficient, $k= 2\pi/\lambda$ with the wavelength $\lambda = c/f_c$ and the carrier frequency $f_c$, $c$ is the speed of light, $r_{l'l}$ is the propagation distance between the $l$th transmit and $l'$th receive antennas, and $w_{l'l}\pa{t}$ is an additive white Gaussian noise with power spectral density $\gamma_n$. In the signal model $\mu_{l'l;R}\pa{t}$, the waveform delay provides the classical range information while the range-dependent phase shifts are specific to the NF propagation. The path-loss is considered independent of the antenna locations and thus included in $\xi$. This is valid for ranges above the so-called \textit{uniform power distance}, which is within a constant factor of the array dimension $D$ \cite{lu2021communicating}. Regarding the propagation distances $r_{l'l}$, they depend on the target model, as represented in \Cref{fig:sys_model}. For a PT located at $z=0$, this distance reads as 
\begin{align}
    r_{l'l}^{\text{PT}}\triangleq \sqrt{R^2+z_l^2}+\sqrt{R^2+y_{l'}^2}.
    \label{eq:dist_PT}
\end{align}
Considering a large metallic ET, the whole target reflects the impinging wave. Yet, we have shown in \cite{moulin2024near} that for a parallel planar plate\footnote{Other PT locations or ET geometries would lead to different distance expressions. In the following, the focus is made on the two target models of \Cref{fig:sys_model}, and the acronyms PT and ET will refer to these specific instances of PTs and ETs.}, the contributions from each part of the plate combine as if the wave were solely reflected by the specular point, this latter being different for each pair of antennas. In that case \cite{moulin2024near}
\begin{align}
    r_{l'l}^{\text{ET}}\triangleq \sqrt{4R^2+\pa{z_l-y_{l'}}^2}.
    \label{eq:dist_ET}
\end{align} 
The Maximum Likelihood (ML) range estimator associated with \eqref{eq:received_signal} reads as 
\begin{align}
    \hat{R},\hat{\xi} = \argmax_{\rho,\Tilde{\xi}} \: T_{U}\pa{\acc{u_{l'l}\pa{t}}_{l,l',t}\middle|\rho,\Tilde{\xi}},
    \label{eq:range_estimation}
\end{align}
with $T_{U}\pa{\acc{u_{l'l}\pa{t}}_{l,l',t}\middle|\rho,\xi}$ the likelihood function of the received signals $u_{l'l}\pa{t}$ for all antenna pairs $\pa{l,l'}$ and all times $t$, for $R=\rho$ and $\xi = \Tilde{\xi}$. Eliminating $\xi$, the range estimator is given by $\hat{R} = \argmax_{\rho} \:\Lambda\pa{\rho}$, with \cite{moulin2024near}
\begin{align}
    \Lambda\pa{\rho} \triangleq \frac{\abs{\sum_{l'=1}^{N_{\r}}\sum_{l=1}^{N_{\t}}\int_t u_{l'l}\pa{t} \mu_{l'l;\rho}^{\star}\pa{t}\mathrm{d}t  }}{\sqrt{\sum_{l'=1}^{N_{\r}}\sum_{l=1}^{N_{\t}}\int_t\abs{\mu_{l'l;\rho}\pa{t}}^2\mathrm{d}t}}. 
    \label{eq:ML_estimator}
\end{align}
To assess the performance of this estimator, the below section studies the associated ambiguity function, and the CRB is developed in \Cref{sec:CRB}. 

\section{Ambiguity Function}
\label{sec:amb_funct}

In order to obtain the normalised ambiguity functions, \eqref{eq:ML_estimator} is evaluated for noise-free signals $u_{l'l}\pa{t} = \xi \mu_{l'l;R}\pa{t}$, and then normalised to a unit maximum. The case of targets very close to the antenna arrays being difficult to handle, two assumptions are introduced, with the Rayleigh distance $R_D\triangleq 2D^2/\lambda$. 
\begin{assumption}
    The ranges satisfy $R\geq 1.2 D$ {\normalfont \cite{bjornson2021primer}}.
    \label{ass:RgeqD}
\end{assumption}
\begin{assumption}
    \label{ass:B_fc}
    The ranges satisfy $R\geq R_D \frac{B}{f_c}$. 
    \label{ass:small_Bfc_ratio}
  \end{assumption}
Under \Cref{ass:RgeqD}, the distance $r_{l'l}$ is well approximated by its second-order Taylor series, denoted $\Tilde{r}_{l'l}$ and reading as
\begin{align}
    \Tilde{r}_{l'l}^{\text{PT}}= 2R + \frac{z_l^2+y_{l'}^2}{2R},
    &\quad 
     \Tilde{r}_{l'l}^{\text{ET}}= 2R + \frac{\pa{z_l-y_{l'}}^2}{4R},
    \label{eq:Taylor_dist}    
\end{align}
these approximations being named the Fresnel approximations.  The following proposition gives an expression of the NF range ambiguity function as a function of the classical waveform normalised ambiguity function  $\chi_C(\rho-R)\triangleq |C(2 (\rho-R)/c)/\mathcal{E}_c|$, and of the phase shift ambiguity function defined as
\begin{align} 
    \chi_{\Delta \phi}\pa{\rho,R} \triangleq  \abs{\frac{1}{N_{\r} N_{\t}}\sum_{l'=1}^{N_{\r}}\sum_{l=1}^{N_{\t}} e^{jk\pa{\tilde{\rho}_{l'l}-\tilde{r}_{l'l}}} }.
    \label{eq:NF_amb}
\end{align}  
The notations $\rho_{l'l}$ and $\Tilde{\rho}_{l'l}$ respectively refer to the propagation distances \eqref{eq:dist_PT} and \eqref{eq:dist_ET}, and approximate distances \eqref{eq:Taylor_dist}, for a range equal to $\rho$ instead of $R$.
\begin{proposition}
    The normalised ambiguity function of \eqref{eq:ML_estimator} is
    \begin{align}
        \normalfont
        \chi\pa{\rho,R} = \abs{\frac{1}{N_{\r} N_{\t}}\sum_{l'=1}^{N_{\r}}\sum_{l=1}^{N_{\t}} e^{jk\pa{\rho_{l'l}-r_{l'l}}} \frac{C\pa{\frac{\rho_{l'l}-r_{l'l}}{c}}}{\mathcal{E}_c}}.
        \label{eq:ambiguity_fct}
    \end{align}
    Under \Cref{ass:RgeqD,ass:small_Bfc_ratio}, \eqref{eq:ambiguity_fct} is approximated as
    \begin{align}
        \normalfont \chi\pa{\rho,R} \approx \chi_C\pa{\rho-R} \chi_{\Delta \phi}\pa{\rho,R}.
        \label{eq:amb_product}
    \end{align}
\end{proposition}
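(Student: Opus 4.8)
The plan is to establish the exact expression \eqref{eq:ambiguity_fct} by directly evaluating the correlation integrals and the (constant) denominator in \eqref{eq:ML_estimator} on the noise-free signal, and then to obtain the factorisation \eqref{eq:amb_product} by inserting the Fresnel approximations \eqref{eq:Taylor_dist} and separating the slowly varying waveform autocorrelation from the rapidly varying near-field phase. First I would substitute $u_{l'l}(t)=\xi\mu_{l'l;R}(t)$ into \eqref{eq:ML_estimator}. Pulling out the two deterministic phases and applying the change of variable $t\mapsto t-r_{l'l}/c$ in the definition of $C$, each pairwise integral factorises as
\begin{align}
    \int_t \mu_{l'l;R}(t)\,\mu_{l'l;\rho}^{\star}(t)\,\d t = e^{jk\pa{\rho_{l'l}-r_{l'l}}}\,C\!\pa{\tfrac{\rho_{l'l}-r_{l'l}}{c}}.\notag
\end{align}
Each denominator term is $\int_t\abs{\mu_{l'l;\rho}(t)}^2\d t=C(0)=\mathcal{E}_c$, so the denominator equals $\sqrt{N_{\r}N_{\t}\mathcal{E}_c}$ for every $\rho$. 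Since $\abs{C(\cdot)}\le C(0)$ with equality at zero argument, the numerator modulus is maximised at $\rho=R$, where all summands equal $\mathcal{E}_c$ and the phases align; the maximum is $\abs{\xi}\sqrt{N_{\r}N_{\t}\mathcal{E}_c}$. Dividing $\Lambda(\rho)$ by this value yields \eqref{eq:ambiguity_fct}.

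For \eqref{eq:amb_product}, under \Cref{ass:RgeqD} I would replace $r_{l'l},\rho_{l'l}$ by their Fresnel counterparts $\tilde r_{l'l},\tilde\rho_{l'l}$ of \eqref{eq:Taylor_dist} in both the phase and the autocorrelation argument. Writing $\tilde\rho_{l'l}-\tilde r_{l'l}=2(\rho-R)+\delta_{l'l}$, where $\delta_{l'l}$ collects the antenna-dependent quadratic terms, the phase splits as $e^{jk(\tilde\rho_{l'l}-\tilde r_{l'l})}=e^{j2k(\rho-R)}e^{jk\delta_{l'l}}$; the common factor $e^{j2k(\rho-R)}$ has unit modulus and cancels in the normalised sum, leaving exactly the phase-shift sum \eqref{eq:NF_amb}. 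The remaining step is to pull the autocorrelation out of the sum with the common delay, $C\pa{(\tilde\rho_{l'l}-\tilde r_{l'l})/c}\approx C\pa{2(\rho-R)/c}$, so that the product structure emerges and, after taking the modulus, gives $\chi_C(\rho-R)\,\chi_{\Delta\phi}(\rho,R)$.

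The main obstacle is precisely this factorisation of $C$, and it is where \Cref{ass:small_Bfc_ratio} enters, whereas \Cref{ass:RgeqD} only controls the Fresnel step. Since $C$ varies over its argument on the waveform resolution scale $\sim 1/B$, pulling it out with the common delay requires the antenna-dependent contribution $\delta_{l'l}/c$ to be negligible against $1/B$. The subtlety is that the binding case is not the variation of $\delta_{l'l}$ near $\rho=R$, but the largest differential delay that can still let an individual correlation compete with the common delay $2(\rho-R)$: this is of order $\max_{l,l'}(z_l^2+y_{l'}^2)/(2R)=D^2/(4R)$ for the PT model, and likewise $\max_{l,l'}(z_l-y_{l'})^2/(4R)=D^2/(4R)$ for the ET model; for larger departures from $R$ the common delay already drives every correlation to zero. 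Imposing that this worst-case differential delay remain within a small fraction of the resolution cell, $D^2/(4R)\le c/(8B)$, rearranges exactly to $R\ge 2D^2 B/c = R_D\,B/f_c$, which is \Cref{ass:small_Bfc_ratio}. Under this condition all antenna pairs share, up to the waveform scale, the common delay $2(\rho-R)/c$, so $C$ factors out and the true ambiguity develops no extra structure that the product would miss; taking the modulus then gives $\chi_C(\rho-R)\,\chi_{\Delta\phi}(\rho,R)$ and proves \eqref{eq:amb_product}.
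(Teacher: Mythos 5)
Your proof is correct and follows essentially the same route as the paper's: direct evaluation of \eqref{eq:ML_estimator} on the noise-free signals followed by normalisation to obtain \eqref{eq:ambiguity_fct}, then factoring the waveform autocorrelation out of the sum because, under \Cref{ass:small_Bfc_ratio}, the antenna-dependent part of the delays stays below a fraction of $1/B$ (with \Cref{ass:RgeqD} covering the Fresnel phase step). Your explicit check that the worst-case differential delay bound $D^2/(4R)\le c/(8B)$ rearranges exactly to \Cref{ass:small_Bfc_ratio} spells out what the paper asserts in one line, but it is the same argument.
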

\begin{proof}
    By developing \eqref{eq:ML_estimator} for noise-free signals, one obtains $\Lambda\pa{\rho}= \abs{\xi} \sqrt{\mathcal{E}_c N_{\r} N_{\t}} \chi\pa{\rho,R}$ with $\chi\pa{\rho,R}\leq \chi\pa{R,R} = 1$. This thus proves \eqref{eq:ambiguity_fct}. Under \Cref{ass:small_Bfc_ratio}, $C(r_{l'l}/c)\approx C(2R/c)$, since the difference $(r_{l'l}-2R)/c$ remains lower than a fraction of $1/B$. This enables to have the waveform terms independent of $l$ and $l'$,  leading to the desired result.
\end{proof}

The product of \eqref{eq:amb_product} implies that the global ambiguity function has a main lobe narrower than the one of $\chi_C$ and $\chi_{\Delta \phi}$ alone. It thus appears that the waveform and the phase information act collaboratively to provide the range estimation.   

\subsection{Point and extended target models}
While \eqref{eq:ambiguity_fct} and \eqref{eq:NF_amb} can be evaluated for different antenna configurations, the impact of the parameters remains elusive. Therefore, the NF ambiguity function is analysed in this section for two antenna configurations. The first configuration is with a single transmit antenna located at $z=0$ and a receive array of dimension $D$, centred at $z=0$. In the second case, the transmit and receive arrays have the same dimension $D$, are centred at $z=0$ and have possibly different numbers of antennas. The transmit antennas emits in a MIMO-TDMA manner. These two cases will be referred to with the SIMO and MIMO acronyms in the following. In these configurations, \Cref{cor:amb_fct} shows that the NF effect is captured by the parameter $\beta$ defined as
\begin{align}
 \beta \triangleq \sqrt{R_{D}\abs{\frac{1}{\rho}-\frac{1}{R}}} = \sqrt{\frac{R_{D}}{R}}\sqrt{\frac{\abs{R-\rho}}{\rho}},
 \label{eq:beta}
\end{align}
this being reminiscent of the performance of beamfocusing in NF communications \cite{cui2022channel}. It also shows that under \Cref{ass:N_ant}, analytic functions can be obtained in terms of the Fresnel integral $F\pa{x} \triangleq \int_{0}^x e^{j\frac{\pi}{2}t^2}\mathrm{d}t$ and cardinal sine  $\normalfont \text{sinc}\pa{x} \triangleq \sin(\pi x)/(\pi x)$.
\begin{assumption}
    \label{ass:N_ant}
    The array size and the numbers of antennas are such that the summations on the antennas of the arrays can be replaced by integrals {\normalfont \cite{cui2022channel}}.
\end{assumption}

\begin{corollary}
    \label{cor:amb_fct}
    The phase shift ambiguity functions \eqref{eq:NF_amb} read as 
    \begin{align*}
        \normalfont \chi_{\Delta \phi}^{\text{PT,SIMO}}\pa{\rho,R} &= \normalfont \chi_{\Delta \phi,\r}\pa{\beta},\\
        \normalfont \chi_{\Delta \phi}^{\text{PT,MIMO}}\pa{\rho,R} & =\normalfont \chi_{\Delta \phi,\r}\pa{\beta}\chi_{\Delta \phi,\t}\pa{\beta},\\
        \normalfont\chi_{\Delta \phi}^{\text{ET,SIMO}}\pa{\rho,R} &\normalfont=  \chi_{\Delta \phi,\r}\pa{\frac{\beta}{\sqrt{2}}},\\
        \normalfont\chi_{\Delta \phi}^{\text{ET,MIMO}}\pa{\rho,R} &\normalfont =  \chi_{\Delta \phi,\r\t}\pa{\frac{\beta}{\sqrt{2}}},
    \end{align*}
    with 
    \begin{align*}
        \begin{array}{rl}
            \normalfont \chi_{\Delta \phi,\t}\pa{\beta} &\normalfont \triangleq   \abs{\frac{1}{N_{\t}} \sum_{l=1}^{N_{\t}} e^{j\frac{\pi}{2} \pa{\frac{\beta z_l}{D}}^2 }}\approx \abs{\frac{2}{\beta}F\pa{\frac{\beta}{2}}},\\
            \normalfont\chi_{\Delta \phi,\r}\pa{\beta} &\normalfont\triangleq   \abs{\frac{1}{N_{\r}} \sum_{l'=1}^{N_{\r}} e^{j\frac{\pi}{2} \pa{\frac{\beta y_{l'}}{D}}^2 }}\approx \abs{\frac{2}{\beta}F\pa{\frac{\beta}{2}}},\\
            \normalfont\chi_{\Delta \phi,\r\t}\pa{\beta} &\triangleq \normalfont \abs{\frac{1}{N_{\t} N_{\r}} \sum_{l=1}^{N_{\t}}\sum_{l'=1}^{N_{\r}} e^{j\frac{\pi}{2} \pa{\beta \frac{z_l-y_{l'}}{D}}^2 }},\\
            &\normalfont \approx \abs{2\frac{F\pa{\beta}}{\beta}  - e^{j\pi\frac{\beta^2}{4}} \text{sinc}\pa{\frac{\beta^2}{4}}},
        \end{array}
    \end{align*}
    the approximations holding under \Cref{ass:N_ant}.
\end{corollary}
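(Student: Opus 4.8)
The plan is to handle the corollary in two stages: first reduce each of the four ambiguity functions \eqref{eq:NF_amb} to a sum of the elementary form appearing in the definitions of $\chi_{\Delta\phi,\t}$, $\chi_{\Delta\phi,\r}$ and $\chi_{\Delta\phi,\r\t}$, and only then invoke \Cref{ass:N_ant} to obtain the closed forms in terms of $F$ and $\text{sinc}$. Throughout, I would use $k=\pi R_D/D^2$, which follows from $R_D\triangleq 2D^2/\lambda$ and $k=2\pi/\lambda$.

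For the first stage, I would substitute the Fresnel approximations \eqref{eq:Taylor_dist} into the phase differences $\tilde\rho_{l'l}-\tilde r_{l'l}$. For the PT model this gives $2\pa{\rho-R}+\frac{z_l^2+y_{l'}^2}{2}\pa{\frac1\rho-\frac1R}$, and for the ET model $2\pa{\rho-R}+\frac{\pa{z_l-y_{l'}}^2}{4}\pa{\frac1\rho-\frac1R}$. The range-only term $2\pa{\rho-R}$ is independent of $\pa{l,l'}$, so it factors out as a unit-modulus phase and disappears under the modulus in \eqref{eq:NF_amb}; likewise the sign of $\frac1\rho-\frac1R$ only conjugates the whole sum and is immaterial. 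Using \eqref{eq:beta}, the remaining quadratic exponent then reads $\frac\pi2\pa{\beta z_l/D}^2$ (and analogously in $y_{l'}$) for the PT model. Setting $N_{\t}=1$, $z_1=0$ recovers $\chi_{\Delta\phi,\r}\pa\beta$ in SIMO, while in MIMO the exponent splits additively into a $z_l$-part and a $y_{l'}$-part, so the double sum factorises into $\chi_{\Delta\phi,\t}\pa\beta\,\chi_{\Delta\phi,\r}\pa\beta$. For the ET model the factor $\tfrac14$ in place of $\tfrac12$ rescales the exponent to $\frac\pi2\pa{\frac{\beta}{\sqrt2}\frac{z_l-y_{l'}}{D}}^2$; the single antenna at $z=0$ yields $\chi_{\Delta\phi,\r}\pa{\beta/\sqrt2}$, and the full double sum is $\chi_{\Delta\phi,\r\t}\pa{\beta/\sqrt2}$ directly from its definition.

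For the second stage, under \Cref{ass:N_ant} I would replace each single-array average $\frac1N\sum_l$ by $\frac1D\int_{-D/2}^{D/2}$ over a centred aperture. The substitution $u=\beta z/D$ maps the integral to $\frac1\beta\int_{-\beta/2}^{\beta/2}e^{j\frac\pi2 u^2}\mathrm{d}u$, which by evenness of the integrand equals $\frac2\beta F\pa{\beta/2}$, giving the stated approximations for $\chi_{\Delta\phi,\t}$ and $\chi_{\Delta\phi,\r}$.

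The main obstacle is the cross term $\chi_{\Delta\phi,\r\t}$, whose double integral couples the two apertures. Here I would pass to normalised coordinates on $\br{-\tfrac12,\tfrac12}^2$ and introduce the difference variable $w=\pa{z-y}/D$, whose marginal weight is the triangular density $1-\abs w$ on $\br{-1,1}$; the double integral thus collapses to $\int_{-1}^{1}\pa{1-\abs w}e^{j\frac\pi2\beta^2 w^2}\mathrm{d}w$. Splitting this into its constant-weight and $\abs w$-weighted parts and using symmetry yields two pieces: the constant-weight piece reduces to $2F\pa\beta/\beta$ after the scaling $t=\beta w$, while the linearly-weighted piece $2\int_0^1 w\,e^{j\frac\pi2\beta^2 w^2}\mathrm{d}w$ integrates in closed form (its integrand is a perfect derivative) to $\frac{2}{j\pi\beta^2}\pa{e^{j\frac\pi2\beta^2}-1}$, which I would recast as $e^{j\pi\beta^2/4}\,\text{sinc}\pa{\beta^2/4}$ via $e^{ja}-1=2j\,e^{ja/2}\sin\pa{a/2}$ with $a=\frac\pi2\beta^2$. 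Subtracting the two pieces and taking the modulus produces the claimed expression. The delicate points are recognising the triangular weighting after the change of variable and carrying out the algebraic reduction of the second integral to the sinc form; the remaining steps are routine substitutions.
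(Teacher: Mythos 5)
Your proposal is correct and follows exactly the route the paper intends: substitute the Fresnel approximations into \eqref{eq:NF_amb}, factor out the common $2(\rho-R)$ phase, identify the residual quadratic exponents with $\beta$ (rescaled by $1/\sqrt{2}$ for the ET), and then replace sums by integrals under \Cref{ass:N_ant}. The paper's own proof is a one-line appeal to the definition plus \Cref{ass:N_ant}; your write-up supplies the details it omits, and in particular your triangular-weight reduction of the coupled double integral to $2F(\beta)/\beta - e^{j\pi\beta^2/4}\,\text{sinc}(\beta^2/4)$ is a valid way to obtain the stated closed form for $\chi_{\Delta\phi,\r\t}$.
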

\begin{proof}
    This follows from the NF ambiguity function definition \eqref{eq:NF_amb}, and the fact that under \Cref{ass:N_ant}, the summations are approximated by integrals.
\end{proof}
\begin{figure}[h]
    \centering
    \includegraphics[width=\columnwidth]{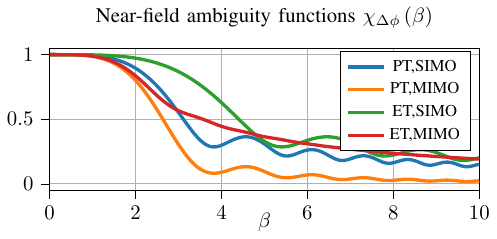}
    \vspace{-0.5cm}
    \caption{NF ambiguity function of \Cref{cor:amb_fct} as a function of $\beta$.  }
    \label{fig:NF_amb}
\end{figure}
The NF ambiguity functions, as a function of $\beta$, are depicted in \Cref{fig:NF_amb}. They are decreasing with $\beta$, possibly with some oscillations. From \eqref{eq:beta}, it appears that the range resolution increases with $R_D/R$, i.e. the estimation performance improves at small ranges, for which the NF effect is significant.


\subsection{Impact of a model mismatch}
This section investigates the effect of using the PT model when the target is a large plane ET, focusing on the SIMO case. This amounts to analysing \eqref{eq:NF_amb} with $\Tilde{\rho}^{\text{PT}}_{l'l}$ and $\Tilde{r}^{\text{ET}}_{l'l}$, as given by \eqref{eq:Taylor_dist}. \Cref{cor:mismatch} provides the associated ambiguity function, its proof being similar to the one of \Cref{cor:amb_fct}.

\begin{corollary}
    \label{cor:mismatch} When the range of an ET is estimated with a PT model, the ambiguity function, in the SIMO case, is given by 
    \begin{align*}
        \normalfont\chi^{\text{Mis,SIMO}}\pa{\rho,R} &\normalfont\approx \chi_C\pa{\rho-R} \chi_{\Delta \phi,\r}\pa{\frac{\gamma}{\sqrt{2}}},
    \end{align*}
with $\gamma \triangleq \sqrt{R_{D}\abs{\frac{2}{\rho}-\frac{1}{R}}}$.
\end{corollary}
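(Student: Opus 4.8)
The plan is to mirror the two-stage structure of the proof of \Cref{cor:amb_fct}: first peel off the waveform factor using \Cref{ass:small_Bfc_ratio}, then evaluate the residual phase sum with the Fresnel-integral approximation of \Cref{ass:N_ant}. The only genuine novelty is the algebra of the phase difference, since the correlation template now carries the PT distance $\Tilde{\rho}^{\text{PT}}_{l'l}$ while the true received signal carries the ET distance $\Tilde{r}^{\text{ET}}_{l'l}$; everything else is identical to the matched case.

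First I would write the mismatched ambiguity function in the form \eqref{eq:ambiguity_fct}, but with $\Tilde{\rho}^{\text{PT}}_{l'l}$ in place of $\rho_{l'l}$ and $\Tilde{r}^{\text{ET}}_{l'l}$ in place of $r_{l'l}$, the Fresnel expressions \eqref{eq:Taylor_dist} being legitimate under \Cref{ass:RgeqD}. Exactly as in the proof of the Proposition, \Cref{ass:small_Bfc_ratio} guarantees that the delay argument $(\Tilde{\rho}^{\text{PT}}_{l'l}-\Tilde{r}^{\text{ET}}_{l'l})/c$ stays within a small fraction of $1/B$ of its leading value $2(\rho-R)/c$, so the autocorrelation term factors out as $C(2(\rho-R)/c)/\mathcal{E}_c=\chi_C(\rho-R)$, independent of the antenna index. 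This reproduces the waveform factor and leaves only the phase sum to handle.

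Specialising to the SIMO geometry (a single transmit antenna at $z=0$, hence $z_l=0$), the Fresnel expressions give
\begin{align*}
    \Tilde{\rho}^{\text{PT}}_{l'}-\Tilde{r}^{\text{ET}}_{l'} = 2(\rho-R) + \frac{y_{l'}^2}{4}\pa{\frac{2}{\rho}-\frac{1}{R}}.
\end{align*}
The constant term $2(\rho-R)$ does not depend on $l'$ and leaves the magnitude unchanged, while the quadratic term is precisely of the form $\frac{\pi}{2}(\beta y_{l'}/D)^2$ that defines $\chi_{\Delta \phi,\r}$. The key step is matching the two quadratic coefficients: using $k=2\pi/\lambda$ and $R_D=2D^2/\lambda$, the identity $\frac{\pi}{2}(\beta/D)^2=\frac{k}{4}(2/\rho-1/R)$ is solved by $\beta^2=\frac{R_D}{2}(2/\rho-1/R)=\gamma^2/2$, so the effective argument is $\gamma/\sqrt{2}$. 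Finally, invoking \Cref{ass:N_ant} to replace the receive-antenna sum by the Fresnel integral, exactly as in \Cref{cor:amb_fct}, identifies the phase sum with $\chi_{\Delta \phi,\r}(\gamma/\sqrt{2})$ and yields the claimed product.

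The main (and essentially only) obstacle is this coefficient matching. In the matched ET case the template contributes the curvature coefficient $1/(4\rho)$, so the difference with the ET truth produces $(1/\rho-1/R)$ and hence the argument $\beta/\sqrt{2}$ of \eqref{eq:beta}; here the PT template instead contributes $1/(2\rho)$, which is twice as large, so the $\rho$-term doubles and one obtains $(2/\rho-1/R)$ and thus $\gamma$. I would also check that the possible sign change of $2/\rho-1/R$ is immaterial, since $\chi_{\Delta \phi,\r}$ depends only on the magnitude of the phase sum; this justifies the absolute value in the definition of $\gamma$ and completes the argument.
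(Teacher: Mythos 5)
Your proposal is correct and follows exactly the route the paper intends: the paper gives no separate proof for \Cref{cor:mismatch}, stating only that it is "similar to the one of \Cref{cor:amb_fct}", and your argument is precisely that proof carried out in detail — factoring out the waveform term under \Cref{ass:small_Bfc_ratio}, computing the Fresnel phase difference $\Tilde{\rho}^{\text{PT}}_{l'}-\Tilde{r}^{\text{ET}}_{l'} = 2(\rho-R)+\frac{y_{l'}^2}{4}\pa{\frac{2}{\rho}-\frac{1}{R}}$, and matching the quadratic coefficient via $R_D=2D^2/\lambda$ to get the argument $\gamma/\sqrt{2}$. The coefficient matching, the consistency with the matched-ET argument $\beta/\sqrt{2}$, and the remark that the sign of $2/\rho-1/R$ is immaterial to the magnitude are all sound.
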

This corollary shows that in the cases of a model mismatch, the ambiguity function has a similar form as those of \Cref{cor:amb_fct}, yet with the parameter $\beta$ replaced by $\gamma$. The NF ambiguity function has then its maximum at $\rho = 2R$ instead of $R$, and thus it provides erroneous information. This is explained by the presence of $\xi$, which erases the absolute phase information. Thus, only phase differences between antennas can be measured. Using \eqref{eq:Taylor_dist}, the phase difference between the signal received at the antenna $l'$ and $l''$ is given by $\Delta_{\phi}^{\text{PT}} = k(y_{l'}^2-y_{l''}^2) /2R$ and $\Delta_{\phi}^{\text{ET}} = k(y_{l'}^2-y_{l''}^2)/4R$ for a PT and ET, $z_l$ being $0$ in the SIMO case. For the same phase difference, the estimated range will thus be affected by a factor $2$ if the erroneous model is considered. Contrarily, the waveform delay of each antenna pair can be directly measured. For a given delay, the range estimation will mainly be obtained from the $2R$ terms in \eqref{eq:Taylor_dist} as it is much larger than the term in $\mathcal{O}(1/R)$. The $2R$ terms being identical for both target models, the waveform ambiguity function is not impacted by the model mismatch. This shows that the target model may have an enormous impact on the range estimation in the NF region. However, in the FF region, $\gamma \to 0$, and thus the NF ambiguity function becomes almost constant, the target model having hence no impact on the estimation performance. 


\section{Cramér-Rao bounds}
\label{sec:CRB}
In this section, the CRBs are developed in \Cref{prop:CRB}, without relying on \Cref{ass:RgeqD,ass:small_Bfc_ratio}, this being in contrast with the ambiguity function results. 

\begin{proposition}
    \label{prop:CRB}
    The CRB on the range estimation is given by
    \begin{align}
        \normalfont \text{CRB}_{R}& \normalfont= \frac{\pa{N_{\r} N_{\t}  \text{SNR} }^{-1}}{\frac{32\pi^2}{c^2} \pa{\pa{\eta - \beta^2}\pa{f_c+f_M}^2 +\eta B_{\text{RMS}}^2 }},
        \label{eq:CRB}
    \end{align}
    with $\normalfont \text{SNR} \triangleq \frac{\abs{\xi}^2 \mathcal{E}_c}{\gamma_n}$, the $\eta$ and $\beta$ factors defined as 
    \begin{align}
        \eta &\normalfont \triangleq \frac{1}{ N_{\r} N_{\t}}\sum_{l'=1}^{N_{\r}}\sum_{l=1}^{N_{\t}} \pa{\frac{1}{2}\frac{\partial r_{l'l}}{\partial R}}^2,\\
        \beta &\normalfont  \triangleq \frac{1}{ N_{\r} N_{\t}}\sum_{l'=1}^{N_{\r}}\sum_{l=1}^{N_{\t}} \pa{\frac{1}{2}\frac{\partial r_{l'l}}{\partial R}},
    \end{align}
    and with $\normalfont f_M$ and $\normalfont B_{\text{RMS}}$ defined in  \eqref{eq:fM} and \eqref{eq:BRMS}.
\end{proposition}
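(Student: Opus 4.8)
The plan is to treat the stacked noisy observations as a circularly-symmetric complex Gaussian vector with deterministic mean $m_{l'l}(t)\triangleq\xi\mu_{l'l;R}(t)$ and white noise of PSD $\gamma_n$, gathering the real unknowns in $\boldsymbol{\theta}=(R,\Re{\xi},\Im{\xi})$; then $\text{CRB}_R$ is the $(R,R)$ entry of the inverse Fisher Information Matrix (FIM). For a deterministic-mean Gaussian model the FIM is
\[
\br{\mathbf{I}}_{ij} = \frac{2}{\gamma_n}\Re{\sum_{l'=1}^{N_{\r}}\sum_{l=1}^{N_{\t}}\int \frac{\partial m_{l'l}^{\star}(t)}{\partial\theta_i}\,\frac{\partial m_{l'l}(t)}{\partial\theta_j}\,\mathrm{d}t},
\]
so the first step is to differentiate $m_{l'l}(t)=\xi e^{-jkr_{l'l}}s(t-r_{l'l}/c)$. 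The derivatives with respect to $\Re{\xi}$ and $\Im{\xi}$ are simply $\mu_{l'l;R}$ and $j\mu_{l'l;R}$, whereas $\partial m_{l'l}/\partial R$ splits into a phase contribution ($-jk\,\partial r_{l'l}/\partial R$) and a delay contribution ($-(1/c)(\partial r_{l'l}/\partial R)\,s'$), both sharing the factor $\xi e^{-jkr_{l'l}}$.

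Next I would reduce every FIM entry to three elementary waveform integrals, computed by Parseval from the definitions \eqref{eq:fM}--\eqref{eq:BRMS}: $\int\abs{s(t)}^2\mathrm{d}t=\mathcal{E}_c$, $\int s(t)\,s'^{\star}(t)\,\mathrm{d}t=-j2\pi f_M\mathcal{E}_c$, and $\int\abs{s'(t)}^2\mathrm{d}t=4\pi^2\mathcal{E}_c(f_M^2+B_{\text{RMS}}^2)$; time-shift invariance makes these independent of $(l,l')$. Inserting them, the $\xi$-block of $\mathbf{I}$ collapses to $(2\mathcal{E}_c/\gamma_n)N_{\r}N_{\t}$ times the $2\times2$ identity, the two $R$--$\xi$ cross terms become proportional to $S_1\triangleq\sum_{l,l'}\partial r_{l'l}/\partial R$ with the mixed phase/delay integral supplying a factor $(f_c+f_M)$, and the $(R,R)$ entry is proportional to $S_2\triangleq\sum_{l,l'}(\partial r_{l'l}/\partial R)^2$ times $(f_c+f_M)^2+B_{\text{RMS}}^2$, the square there being completed by the $k^2$ term, the $\int\abs{s'}^2$ term and the cross $\int s\,s'^{\star}$ term.

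Because the $\xi$-block is a scaled identity, eliminating the nuisance $\xi$ through the Schur complement merely subtracts from the $(R,R)$ entry the sum of the squared $R$--$\xi$ cross terms divided by that scale; since $\Re{\xi}$ and $\Im{\xi}$ enter the two cross terms symmetrically, their squares recombine into $\abs{\xi}^2$ and remove exactly $\beta^2(f_c+f_M)^2$ (after the $\eta,\beta$ normalisation), whereas the $B_{\text{RMS}}^2$ part, which never couples to $\xi$, is left intact. Substituting $\eta=S_2/(4N_{\r}N_{\t})$, $\beta=S_1/(2N_{\r}N_{\t})$ and $\SNR=\abs{\xi}^2\mathcal{E}_c/\gamma_n$ then yields \eqref{eq:CRB}. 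The hard part will be the sign and phase bookkeeping in the $(R,R)$ entry and the cross terms: one must keep the imaginary unit of $\int s\,s'^{\star}$ exactly right so that the phase and delay parts add constructively into $(f_c+f_M)^2$ rather than cancelling, and so that precisely this carrier-dependent part---and none of the bandwidth part $B_{\text{RMS}}^2$---is what the $\xi$-elimination removes, leaving the $(\eta-\beta^2)(f_c+f_M)^2+\eta B_{\text{RMS}}^2$ structure.
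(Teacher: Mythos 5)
Your proposal is correct: the Parseval identities you list ($\int s\,s'^{\star}\mathrm{d}t=-j2\pi f_M\mathcal{E}_c$, $\int\abs{s'}^2\mathrm{d}t=4\pi^2\mathcal{E}_c(f_M^2+B_{\text{RMS}}^2)$), the scaled-identity $\xi$-block, the cross terms proportional to $S_1(f_c+f_M)$, the $(R,R)$ entry proportional to $S_2\big((f_c+f_M)^2+B_{\text{RMS}}^2\big)$, and the Schur-complement subtraction of exactly $\beta^2(f_c+f_M)^2$ all check out and reproduce \eqref{eq:CRB}. The overall architecture is the same as the paper's: a $3\times3$ FIM over $(R,\xi_{\mathcal{R}},\xi_{\mathcal{I}})$, elimination of the nuisance $\xi$ by block inversion, and Parseval to surface $f_M$ and $B_{\text{RMS}}$. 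The one genuine difference is how the FIM itself is obtained. The paper computes it as the negative expected Hessian of the log-likelihood, which forces it to differentiate the model twice: its quantity $M$ involves $\partial^2\mu_{l'l;R}/\partial R^2$, hence the curvature term $\gamma\propto\sum_{l',l}\partial^2 r_{l'l}/\partial R^2$, which must be carried through and only drops out at the end because it multiplies purely imaginary quantities inside $\mathcal{R}\acc{M}$. You instead invoke the standard first-derivative (gradient outer-product) form of the FIM for a deterministic mean in white circular Gaussian noise, so only $\partial m_{l'l}/\partial R$ is ever needed and the second-order distance derivatives never appear. Your route is therefore shorter and avoids the cancellation bookkeeping; the paper's route is more self-contained, deriving everything from the explicit log-likelihood rather than citing the Gaussian FIM formula, and its intermediate objects $M$, $N$ make the structure $-\mathcal{R}\acc{M}-\abs{N}^2$ explicit. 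Both are rigorous, and by the information equality they must (and do) agree.
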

\begin{proof}
\iffullproof
Associated with the estimation problem \eqref{eq:range_estimation}, the Fisher information matrix is given by 
\begin{align*}
    I = -\mathbb{E}\{ \begin{bmatrix} \frac{\partial^2 \log T_U}{\partial \xi_{\mathcal{R}}^2} & \frac{\partial^2\log T_U}{\partial \xi_{\mathcal{R}}\partial \xi_{\mathcal{I}}} & \frac{\partial^2\log T_U}{\partial \xi_{\mathcal{R}}\partial R}\\
        \frac{\partial^2 \log T_U}{\partial \xi_{\mathcal{I}}\partial \xi_{\mathcal{R}}} & \frac{\partial^2\log T_U}{\partial \xi_{\mathcal{I}}^2} & \frac{\partial^2 \log T_U}{\partial \xi_{\mathcal{I}}\partial R} \\
        \frac{\partial^2 \log T_U}{\partial R\partial \xi_{\mathcal{R}}} & \frac{\partial^2\log T_U}{\partial R\partial \xi_{\mathcal{I}}} & \frac{\partial^2 \log T_U}{\partial R^2}
    \end{bmatrix}\},
\end{align*}
with $\xi_{\mathcal{R}}$ and $\xi_{\mathcal{I}}$ the real and imaginary parts of $\xi$, and $\mathbb{E}\{\cdot\}$ the expectation operator. The log-likelihood function is developed as 
\begin{align*}
    \log T_U &= K-\frac{1}{\gamma_n }\sum_{l'=1}^{N_{\r}}\sum_{l=1}^{N_{\t}}\int_t \abs{u_{l'l}\pa{t}-\xi \mu_{l'l;R}\pa{t}}^2\mathrm{d}t,\\
    &= \Tilde{K} -\frac{\xi_{\mathcal{R}}^2 N_{\r} N_{\t} \mathcal{E}_c}{\gamma_n }-\frac{\xi_{\mathcal{I}}^2 N_{\r} N_{\t} \mathcal{E}_c}{\gamma_n }\nonumber\\
    &  \quad+ \frac{2 \xi_{\mathcal{R}}}{\gamma_n  }\mathcal{R}\acc{\sum_{l'=1}^{N_{\r}}\sum_{l=1}^{N_{\t}}\int_t u_{l'l}\pa{t} \mu_{l'l;R}^{\star}\pa{t}\mathrm{d}t}\nonumber\\
    &  \quad+ \frac{2 \xi_{\mathcal{I}}}{\gamma_n  }\mathcal{I}\acc{\sum_{l'=1}^{N_{\r}}\sum_{l=1}^{N_{\t}}\int_t u_{l'l}\pa{t} \mu_{l'l;R}^{\star}\pa{t}\mathrm{d}t}.
\end{align*}
This leads to the following elements of the Fisher information matrix:
\begin{align*}
    I =\frac{8N_{\r} N_{\t} \mathcal{E}_c}{\gamma_n }  \begin{bmatrix} \frac{1}{4} & 0 & -\frac{\mathcal{R}\acc{\xi N}}{2}\\
        0 & \frac{1}{4} &  -\frac{\mathcal{I}\acc{\xi N}}{2} \\
        -\frac{\mathcal{R}\acc{\xi N}}{2} &  -\frac{\mathcal{I}\acc{\xi N}}{2} & -\abs{\xi}^2\mathcal{R}\acc{M}
    \end{bmatrix},
\end{align*}
with $N$ and $M$ defined as
\begin{align*}
    N &= \frac{1}{2 N_{\r} N_{\t} \mathcal{E}_c}\sum_{l'=1}^{N_{\r}}\sum_{l=1}^{N_{\t}}\int_t \mu_{l'l;R}\pa{t} {\frac{\partial \mu^{\star}_{l'l;R}}{\partial R}}\pa{t}\mathrm{d}t,\\
   M &\triangleq \frac{1}{4 N_{\r} N_{\t} \mathcal{E}_c} \sum_{l'=1}^{N_{\r}}\sum_{l=1}^{N_{\t}}\int_t \mu_{l'l;R}\pa{t}{\frac{\partial^2\mu^{\star}_{l'l;R}}{\partial R^2}}\pa{t}\mathrm{d}t.
\end{align*}
Partitionning the matrix and using the block-matrix inversion lemma \cite[Fact 2.17.3]{bernstein2009matrix}, the CRB is obtained as
\begin{align*}
    \text{CRB}_{R}
    & = \frac{1}{\frac{8\abs{\xi}^2 \mathcal{E}_c N_{\r} N_{\t} }{\gamma_n} \pa{-\mathcal{R}\acc{M}-\abs{N}^2}}.
\end{align*}
It remains to develop the expressions of $M$ and $N$ for the model \eqref{eq:model}. To that aim, the following notations are introduced:
\begin{align*}
    X &\triangleq \int_t s\pa{t} \frac{\partial s^{\star}\pa{t}}{\partial t}\mathrm{d}t,\\
    B &\triangleq \int_t \abs{\frac{\partial s\pa{t}}{\partial t}}^2 \mathrm{d}t = -\int_t s\pa{t} \frac{\partial s^{\star}\pa{t}}{\partial t^2}\mathrm{d}t,
\end{align*}
the last equality being obtained by integration by part. From the same argument, it can be observed $X$ is purely imaginary: $X=j\mathcal{I}\acc{X}$, while $B$ is real by definition. Moreover, from Parseval theorem, 
\begin{align}
    \frac{jX}{\mathcal{E}_c} &= 2\pi f_M, \quad 
   \frac{B}{\mathcal{E}_c}-\frac{|X|^2}{\mathcal{E}_c^2} = 4\pi^2 B^2_{\text{RMS}}.
\end{align}
 The following is furthermore introduced:
\begin{align*}
    \gamma &\triangleq \frac{1}{2 N_{\r} N_{\t}}\sum_{l'=1}^{N_{\r}}\sum_{l=1}^{N_{\t}} \frac{\partial^2 r_{l'l}}{\partial R^2}.
\end{align*}
Equipped with these notations, the model derivatives are obtained as
\begin{align*}
    &\frac{\partial \mu_{l'l;R}\pa{t}}{\partial R} \\ &= -e^{-jkr_{l'l}}\frac{\partial r_{l'l}}{\partial R}\pa{jk s\pa{t-\frac{r_{l'l}}{c}}+\frac{1}{c}\frac{s\pa{t-\frac{r_{l'l}}{c}}}{\partial t}},\\
    &\frac{\partial^2 \mu_{l'l;R}\pa{t}}{\partial R^2} \\ &= e^{-jkr_{l'l}}\left[\pa{-k^2\pa{\frac{\partial r_{l'l}}{\partial R}}^2-jk \frac{\partial^2 r_{l'l}}{\partial R^2} }s\pa{t-\frac{r_{l'l}}{c}}  \right.\\
    &+ \pa{\frac{2jk}{c}\pa{\frac{\partial r_{l'l}}{\partial R}}^2 - \frac{1}{c} \frac{\partial^2 r_{l'l}}{\partial R^2} }\frac{\partial s\pa{t-\frac{r_{l'l}}{c}}}{\partial t} \\
    &\left. + \frac{1}{c^2}\pa{\frac{\partial r_{l'l}}{\partial R}}^2 \frac{\partial^2 s\pa{t-\frac{r_{l'l}}{c}}}{\partial t^2} \right].
\end{align*}
This leads to
\begin{align*}
    &\int_t \mu_{l'l;R}\pa{t}\frac{\partial^2\mu^{\star}_{l'l;R}}{\partial R^2}\pa{t}\mathrm{d}t \nonumber \\
    &= \pa{-k^2\pa{\frac{\partial r_{l'l}}{\partial R}}^2+jk \frac{\partial^2 r_{l'l}}{\partial R^2} }\mathcal{E}_c\\
    &+ \pa{\frac{-2jk}{c}\pa{\frac{\partial r_{l'l}}{\partial R}}^2 - \frac{1}{c} \frac{\partial^2 r_{l'l}}{\partial R^2} }X- \frac{1}{c^2}\pa{\frac{\partial r_{l'l}}{\partial R}}^2 B.\nonumber
\end{align*}
Plugging the above into $M$,
\begin{align*}
    &\mathcal{R}\acc{M}\\
    &= \mathcal{R}\acc{-k^2\eta+2jk \gamma + \pa{\frac{2k}{c}\eta - \frac{2j}{c} \gamma }\frac{\mathcal{I}\acc{X}}{\mathcal{E}_c} - \frac{\eta}{c^2} \frac{B}{\mathcal{E}_c}},\\
    & = -k^2\eta + \frac{2k}{c}\eta\frac{\mathcal{I}\acc{X}}{\mathcal{E}_c} - \frac{1}{c^2}\eta \frac{B}{\mathcal{E}_c}. 
\end{align*}
For the second term, it comes that
\begin{align*}
    &\int_t \mu_{l'l;R}\pa{t}\frac{\partial\mu^{\star}_{l'l;R}}{\partial R}\pa{t}\mathrm{d}t = j\pa{k \mathcal{E}_c-\frac{1}{c}\mathcal{I}\acc{X}}\frac{\partial r_{l'l}}{\partial R},
\end{align*}
leading to
\begin{align*}
    &N= j\beta\pa{k -\frac{1}{c}\frac{\mathcal{I}\acc{X}}{\mathcal{E}_c}}.
\end{align*}
Therefore, 
\begin{align*}
    &-\mathcal{R}\acc{M}-\abs{N}^2\nonumber\\
&= \pa{\eta - \beta^2}\pa{k-\frac{1}{c}\frac{\mathcal{I}\acc{X}}{\mathcal{E}_c}}^2 +\frac{\eta}{c^2}\pa{\frac{B}{\mathcal{E}_c}-\frac{\abs{X}^2}{\mathcal{E}_c^2}},\\
&= \pa{\eta - \beta^2}\frac{4\pi^2}{c^2}\pa{f_c+f_M}^2 +\eta\frac{4\pi^2}{c^2}B_{\text{RMS}}^2,
\end{align*}
concluding the proof.
\else
 This result is obtained by computing the $3\times3$ Fisher Information Matrix (FIM) for $R$, and the real and imaginary parts of $\xi$, the log-likelihood of the estimation problem \eqref{eq:range_estimation} reading as 
\begin{align*}
    \log T_U &= K-\frac{1}{\gamma_n }\sum_{l'=1}^{N_{\r}}\sum_{l=1}^{N_{\t}}\int \abs{u_{l'l}\pa{t}-\xi \mu_{l'l;R}\pa{t}}^2\mathrm{d}t,
\end{align*}
with $K$ a constant. Partitioning the FIM and using the block-matrix inversion lemma, the CRB is obtained as
\begin{align*}
    \text{CRB}_{R}
    & = \frac{\gamma_n}{8\abs{\xi}^2 \mathcal{E}_c N_{\r} N_{\t}  \pa{-\mathcal{R}\acc{M}-\abs{N}^2}},
\end{align*}
with $\mathcal{R}\acc{\cdot}$ the real part operator, and $N$ and $M$ defined as
\begin{align*}
    N &\triangleq \frac{1}{2 N_{\r} N_{\t} \mathcal{E}_c}\sum_{l'=1}^{N_{\r}}\sum_{l=1}^{N_{\t}}\int \mu_{l'l;R}\pa{t} {\frac{\partial \mu^{\star}_{l'l;R}}{\partial R}}\pa{t}\mathrm{d}t,\\
   M &\triangleq \frac{1}{4 N_{\r} N_{\t} \mathcal{E}_c} \sum_{l'=1}^{N_{\r}}\sum_{l=1}^{N_{\t}}\int \mu_{l'l;R}\pa{t}{\frac{\partial^2\mu^{\star}_{l'l;R}}{\partial R^2}}\pa{t}\mathrm{d}t.
\end{align*}
It remains to develop the expressions of $M$ and $N$ for model \eqref{eq:model}. After lengthy developments present in the arXiv version of this paper \cite{thiranarXiv}, using the Parseval theorem to translate the time derivatives in the frequency domain and linking them to $f_M$ and $B_{\text{RMS}}$, the proposition is obtained. 
\fi 
\end{proof}
In \eqref{eq:CRB}, one can identify the array gain $N_{\r}N_{\t}$ which multiplies the SNR. This expression also highlights the cooperation existing between the NF phase term $(f_c+f_M)^2(\eta-\beta^2)$, and the waveform term $\eta B_{\text{RMS}}^2$, similarly to what has been observed for the ambiguity functions. The NF effect depends on the carrier frequency and waveform central frequency while the waveform information depends on its RMS bandwidth. The NF effect also depends on the array geometry through the $\sqrt{\eta-\beta^2}$ term, which is the RMS value of the propagation distance derivatives, since it can be written as 
\begin{align}
    \eta-\beta^2 = \frac{1}{N_{\r} N_{\t}}\sum_{l'=1}^{N_{\r}}\sum_{l=1}^{N_{\t}} \pa{\frac{1}{2}\frac{\partial r_{l'l}}{\partial R}-\beta}^2. 
\end{align}
While significant in the NF region, this term goes to $0$ in the FF region of an array, as all propagation distance derivatives become identical when $R$ is large. \Cref{fig:CRB} illustrates the range CRB for an ET in the MIMO case, for two bandwidths and carrier frequencies. One can observe that for small ranges, the NF information is dominant, depending on the carrier frequency. For large ranges, the NF effect vanishes compared to the waveform information. Therefore, the CRB saturates to the waveform limit.  

\begin{figure}[h]
    \centering
    \includegraphics[width=\columnwidth]{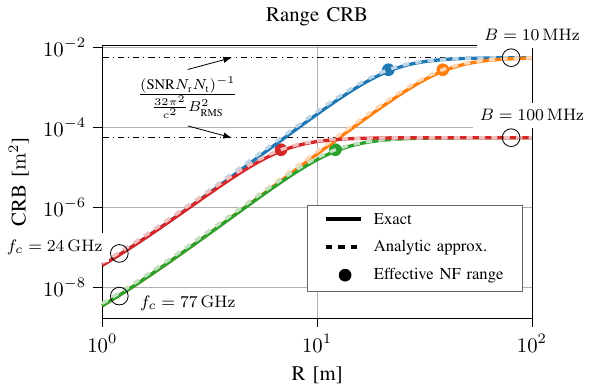}
    \vspace{-0.8cm}
    \caption{CRB for the ET MIMO case, for two bandwidths and carrier frequencies, with $N_{\t} = N_{\r} = 25$, $D=\SI{1.5}{m}$, and $\text{SNR} = \SI{10}{dB}$. The corresponding Rayleigh distances are respectively $\SI{360}{m}$ and $\SI{1155}{m}$. The considered waveform is a cardinal sine, for which $f_M=\SI{0}{Hz}$ and $B_{\text{RMS}} = B/\sqrt{12}$. The dashed lines correspond to the analytical expressions of \Cref{prop:analytic_CRB}. The circles depict the ranges at which the NF information vanishes with respect to the waveform one, as given by \eqref{eq:R_trans}.}
    \label{fig:CRB}
\end{figure}

\subsection{Point and extended target models}
As for the ambiguity functions, analytic expressions for $\eta$ and $\beta$ are obtained in \Cref{prop:analytic_CRB} for the system model of \Cref{fig:sys_model}, to gain insights on the parameters' impact.

\begin{proposition}
    \label{prop:analytic_CRB}
    With \Cref{ass:N_ant}, the factors $\eta$ and $\beta$ only depend on $u\triangleq R/D$ as follows:
    \begin{align*}
        &\left\{\begin{array}{rl}
            \normalfont\eta^{\text{PT,SIMO}} &\normalfont\approx \frac{1}{4}+\frac{u}{2} \text{atan}\pa{\frac{1}{2u}}+u \:\text{asinh}\pa{\frac{1}{2u}},\\[5pt]
            \normalfont\beta^{\text{PT,SIMO}} &\normalfont\approx \frac{1}{2} + u \:\text{asinh}\pa{\frac{1}{2u}},
        \end{array}\right.\\
        &\left\{\begin{array}{rl}
            \normalfont\eta^{\text{PT,MIMO}} &\normalfont\approx u \:\text{atan}\pa{\frac{1}{2u}} +2 u^2 \text{asinh}^2\pa{\frac{1}{2u}},\\[5pt]
            \normalfont\beta^{\text{PT,MIMO}} &\normalfont\approx 2u \:\text{asinh}\pa{\frac{1}{2u}},
        \end{array}\right.\\
        &\left\{\begin{array}{rl}
            \normalfont \eta^{\text{ET,SIMO}}  &\normalfont \approx 4u\: \text{atan}\pa{\frac{1}{4u}},\\[5pt]
            \normalfont\beta^{\text{ET,SIMO}}  &\normalfont \approx 4u \:\text{asinh}\pa{\frac{1}{4u}},
        \end{array}\right.\\
        &\left\{\begin{array}{rl}
            \normalfont\eta^{\text{ET,MIMO}}  &\normalfont \approx 4u\:\text{atan}\pa{\frac{1}{2u}}-4u^2\log\pa{1+\frac{1}{4u^2}},\\[5pt]
            \normalfont\beta^{\text{ET,MIMO}}  &\normalfont \approx 4u\:\text{asinh}\pa{\frac{1}{2u}}-8u^2\pa{\sqrt{1+\frac{1}{4u^2}}-1}.
        \end{array}\right.
    \end{align*}
\end{proposition}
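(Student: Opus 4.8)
The plan is to evaluate the defining double sums for $\eta$ and $\beta$ directly from the \emph{exact} distances \eqref{eq:dist_PT} and \eqref{eq:dist_ET}, rather than their Fresnel approximations, since \Cref{prop:CRB} holds without \Cref{ass:RgeqD}. First I would differentiate with respect to $R$: for the PT model $\frac{1}{2}\frac{\partial r_{l'l}^{\text{PT}}}{\partial R}=\frac{1}{2}\left(\frac{R}{\sqrt{R^2+z_l^2}}+\frac{R}{\sqrt{R^2+y_{l'}^2}}\right)$, which is a sum of two single-antenna terms, whereas for the ET model $\frac{1}{2}\frac{\partial r_{l'l}^{\text{ET}}}{\partial R}=\frac{2R}{\sqrt{4R^2+(z_l-y_{l'})^2}}$, which couples the transmit and receive indices only through their difference. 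This structural distinction (separable for PT, difference-dependent for ET) is what organises the whole computation.

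Next, under \Cref{ass:N_ant} I would replace the normalised antenna sums by normalised integrals over $[-D/2,D/2]$, setting $z_l=0$ in the SIMO configurations (which is what produces the constant $\tfrac12$ and $\tfrac14$ contributions). In the PT cases the double sum separates into products of one-dimensional integrals of $\frac{R}{\sqrt{R^2+y^2}}$, of its square $\frac{R^2}{R^2+y^2}$, and of the cross term; these admit the elementary antiderivatives $\int \frac{dy}{\sqrt{R^2+y^2}}=\text{asinh}(y/R)$ and $\int\frac{dy}{R^2+y^2}=\frac{1}{R}\text{atan}(y/R)$. Evaluating on $[-D/2,D/2]$, dividing by $D$ (or $D^2$ for MIMO), and substituting $u=R/D$ then yields the stated PT expressions; in particular, expanding the square $(a+b)^2=a^2+2ab+b^2$ generates the $\text{atan}$ term from $\langle a^2\rangle$ and the $\text{asinh}^2$ term from the separable cross product $\langle a\rangle\langle b\rangle$.

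The ET SIMO case is again a single one-dimensional integral, now of $\frac{2R}{\sqrt{4R^2+y^2}}$ and its square, yielding $\text{asinh}(1/(4u))$ and $\text{atan}(1/(4u))$. The main obstacle is the ET MIMO case, where the integrand depends on $z-y$ and the double integral does not factorise. I would resolve it through the change of variable $w=z-y$: integrating out the remaining variable converts the uniform square into a triangular weight, so that $\frac{1}{D^2}\iint f(z-y)\,dz\,dy=\frac{2}{D^2}\int_0^D f(w)(D-w)\,dw$ by evenness of $f$. Splitting the result into a piece carrying the weight $D$ (which reproduces an $\text{asinh}$ for $\beta$ and an $\text{atan}$ for $\eta$) and a piece carrying $-w$ (with primitives $\sqrt{4R^2+w^2}$ for $\beta$ and $\frac{1}{2}\log(4R^2+w^2)$ for $\eta$) produces exactly the $\sqrt{1+1/(4u^2)}$ and $\log(1+1/(4u^2))$ contributions after normalisation.

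I expect the only genuine difficulty to be this triangular convolution weight in the ET MIMO case and the associated endpoint evaluations; the PT and ET SIMO cases are routine one-dimensional integrations, and everything reduces to elementary primitives once the change of variable is in place.
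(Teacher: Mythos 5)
Your proposal is correct and follows essentially the same route as the paper: differentiate the exact distances \eqref{eq:dist_PT} and \eqref{eq:dist_ET}, replace the antenna sums by normalised integrals under \Cref{ass:N_ant}, and evaluate the resulting elementary primitives (atan, asinh, logarithm, square root), with the PT cases separating and the SIMO cases reducing to one-dimensional integrals. The only place you add detail beyond the paper is the ET MIMO case, where your reduction of the double integral to a single integral with triangular weight $(D-w)$ is a valid way to obtain the stated closed form, which the paper quotes directly from the double integral without showing the intermediate steps.
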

\begin{proof}
    \iffullproof
    In the case of a PT, the range derivative is given by   
\begin{align*}
    \frac{\partial r^{\text{PT}}_{l'l}}{\partial R} = \frac{R}{\sqrt{R^2+z_l^2}}  + \frac{R}{\sqrt{R^2+y_{l'}^2}}.
\end{align*}
Thus,
\begin{align*}
    \eta^{\text{PT}} 
    &= \frac{1}{4N_{\t}}\sum_{l=1}^{N_{\t}}\frac{1}{1+\pa{\frac{z_l}{R}}^2}  + \frac{1}{4N_{\r}}\sum_{l'=1}^{N_{\r}}\frac{1}{1+\pa{\frac{y_{l'}}{R}}^2}\nonumber \\ &\quad + \frac{1}{4N_{\r} N_{\t}}\sum_{l=1}^{N_{\t}}\sum_{{l'}=1}^{N_{\r}} \frac{2}{\sqrt{1+\pa{\frac{z_l}{R}}^2}\sqrt{1+\pa{\frac{y_{l'}}{R}}^2}},\\
    \beta^{\text{PT}}  &= \frac{1}{2N_{\t}}\sum_{l=1}^{N_{\t}}\frac{1}{\sqrt{1+\pa{\frac{z_l}{R}}^2}} + \frac{1}{2N_{\r}}\sum_{{l'}=1}^{N_{\r}}\frac{1}{\sqrt{1+\pa{\frac{y_{l'}}{R}}^2}},
\end{align*}
When the number of antennas is large, the above sums are approximated as 
\begin{align*}
    \frac{1}{4N_{\t}}\sum_{l=1}^{N_{\t}}\frac{1}{1+\pa{\frac{z_l}{R}}^2}  &\approx \frac{R}{2D}\int_{0}^{\frac{D}{2R}}\frac{1}{1+u^2} \mathrm{d}u,\\
    & = \frac{R}{2D} \text{atan}\pa{\frac{D}{2R}},\\
    \frac{1}{2N_{\t}}\sum_{l=1}^{N_{\t}}\frac{1}{\sqrt{1+\pa{\frac{z_l}{R}}^2}} &\approx \frac{R}{D}\int_{0}^{\frac{D}{2R}}\frac{1}{\sqrt{1+u^2}} \mathrm{d}u,\\
    & = \frac{R}{D} \text{asinh}\pa{\frac{D}{2R}},
\end{align*}
providing the desired result. In the case of an extended target model, the range derivative is given by
\begin{align*}
    \frac{\partial r^{\text{ET}}_{l'l}}{\partial R} = \frac{2}{\sqrt{1+\pa{\frac{z_l-y_{l'}}{2R}}^2}}.
\end{align*}
Thus,
\begin{align*}
    \eta^{\text{ET,SIMO}} &= \frac{1}{N_{\r}}\sum_{l'=1}^{N_{\r}}\frac{1}{1+\pa{\frac{y_{l'}}{2R}}^2}
    \approx \frac{4 R}{D}\text{atan}\pa{\frac{D}{4R}},\\
    \beta^{\text{ET,SIMO}} &= \frac{1}{N_{\r}}\sum_{l'=1}^{N_{\r}}\frac{1}{\sqrt{1+\pa{\frac{y_{l'}}{2R}}^2}}
    \approx \frac{4R}{D}\text{asinh}\pa{\frac{D}{4R}},
\end{align*}
approximating the sum as an integral under \Cref{ass:N_ant}. In the MIMO case, 
\begin{align*}
    \eta^{\text{ET,MIMO}} &= \frac{1}{N_{\r} N_{\t}}\sum_{l'=1}^{N_{\r}}\sum_{l=1}^{N_{\t}}\frac{1}{1+\pa{\frac{z_l-y_{l'}}{2R}}^2},\\
    &\approx \frac{4R^2}{D^2}\int_{-\frac{D}{4R}}^{\frac{D}{4R}}\int_{-\frac{D}{4R}}^{\frac{D}{4R}}\frac{1}{1+\pa{u-v}^2}\mathrm{d}u\mathrm{d}v,\\
    &= \frac{4R}{D}\text{atan}\pa{\frac{D}{2R}}-\frac{4R^2}{D^2} \log\pa{1+\pa{\frac{D}{2R}}^2},
\end{align*} 
\begin{align*}
    \beta^{\text{ET,MIMO}} &= \frac{1}{N_{\r} N_{\t}}\sum_{l'=1}^{N_{\r}}\sum_{l=1}^{N_{\t}}\frac{1}{\sqrt{1+\pa{\frac{z_l-y_{l'}}{2R}}^2}},\\
    &\approx \frac{4R^2}{D^2}\int_{-\frac{D}{4R}}^{\frac{D}{4R}}\int_{-\frac{D}{4R}}^{\frac{D}{4R}}\frac{1}{\sqrt{1+\pa{u-v}^2}}\mathrm{d}u\mathrm{d}v,\\
    &=  \frac{4R}{D}\text{asinh}\pa{\frac{D}{2R}}-\frac{8R^2}{D^2}\pa{\sqrt{1+\frac{D^2}{4R^2}}-1},
\end{align*}
concluding the proof.
\else
The proposition is obtained by computing the derivatives of the propagation distances, and then approximating the sums as integrals which have the above closed-form expressions. 
\fi
\end{proof}
The above shows that the ratio $R/D$ is the key parameter for the CRB. As it can be observed in \Cref{fig:CRB}, the analytic approximations are in very close agreement with the exact CRB expressions even for low antenna densities, e.g., $25$ antennas over $\SI{1.5}{m}$ in this case. Focusing on the NF geometry term $\eta-\beta^2$, the analytic expressions of \Cref{prop:analytic_CRB} are depicted in \Cref{fig:NF_CRB_term}. As it can be observed, while the above functions are intricate, the CRB term evolves almost linearly, in log-scale, when $R\geq D$. This observation is formalised in the below corollary, which provides a Taylor series expansion of the CRB for $R\geq D$.
\begin{corollary}
    If \Cref{ass:RgeqD} holds,
    \begin{align}
        \normalfont\text{CRB}_{R} & \normalfont \approx  \frac{\pa{N_{\r} N_{\t}\text{SNR}}^{-1}}{\frac{32\pi^2}{c^2} \pa{\frac{\alpha \pa{f_c+f_M}^2}{11520}\pa{\frac{D}{R}}^4 +B_{\text{RMS}}^2  }},
    \end{align}
    with the factor $\alpha = 4$ (resp. $8$, $1$, $7$) in the PT SIMO case (resp. PT MIMO, ET SIMO, ET MIMO). 
    \label{cor:series}
\end{corollary}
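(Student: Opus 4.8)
The plan is to obtain the corollary by Taylor-expanding the closed forms of $\eta$ and $\beta$ given in \Cref{prop:analytic_CRB} in the regime where $u = R/D$ is large — which is guaranteed by \Cref{ass:RgeqD} — and then substituting the resulting leading-order behaviour of $\eta$ and $\eta - \beta^2$ into the CRB \eqref{eq:CRB}. Concretely, I would denote by $x$ the small argument of the $\text{atan}$ and $\text{asinh}$ terms, namely $x = D/(4R)$ in the ET SIMO case and $x = D/(2R)$ in the three other cases, and expand using $\text{atan}(x) = x - x^3/3 + x^5/5 - \cdots$ and $\text{asinh}(x) = x - x^3/6 + 3x^5/40 - \cdots$, supplemented by $\log(1+x^2) = x^2 - x^4/2 + \cdots$ and $\sqrt{1+x^2} = 1 + x^2/2 - x^4/8 + \cdots$ for the double-integral MIMO expressions.

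The central structural fact to establish is that, in every case, $\eta$ and $\beta$ both expand as $1 + c_2 x^2 + c_4 x^4 + \cdots$ with the \emph{same} quadratic coefficient appearing in $\eta$ and in $\beta^2$, so that the $O(x^2)$ terms cancel in the difference and leave $\eta - \beta^2 = O(x^4)$. For instance, in the ET SIMO case one gets $\eta = 1 - x^2/3 + x^4/5 - \cdots$ and $\beta = 1 - x^2/6 + 3x^4/40 - \cdots$, whence $\beta^2 = 1 - x^2/3 + 8x^4/45 - \cdots$ and $\eta - \beta^2 = x^4/45 + \cdots$; re-inserting $x = D/(4R)$ yields $\eta - \beta^2 \approx (D/R)^4/11520$, i.e.\ $\alpha = 1$. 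Repeating the same bookkeeping in the remaining cases produces the quartic coefficients $4/11520$, $8/11520$ and $7/11520$, that is $\alpha = 4,8,7$ for PT SIMO, PT MIMO and ET MIMO.

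Since each expansion simultaneously gives $\eta = 1 + O(x^2) \to 1$, the waveform term $\eta B_{\text{RMS}}^2$ reduces to $B_{\text{RMS}}^2$ at leading order, whereas the NF term must retain the full $(\eta - \beta^2)$ factor because it multiplies the large quantity $(f_c + f_M)^2$. Substituting $\eta - \beta^2 \approx \frac{\alpha}{11520}(D/R)^4$ and $\eta \approx 1$ into \eqref{eq:CRB} then yields the stated expression.

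The main obstacle is the consistent fourth-order bookkeeping: one must expand $\eta$ and $\beta$ to the same order and verify the exact cancellation of the quadratic terms, which in the MIMO cases also requires the logarithm and square-root series in addition to $\text{atan}$ and $\text{asinh}$. This cancellation is not accidental: as noted after \Cref{prop:CRB}, $\eta - \beta^2$ is the empirical variance of the quantities $\frac{1}{2}\partial r_{l'l}/\partial R$, which deviate from their common far-field value $1$ only at order $(D/R)^2$, so their variance is necessarily $O((D/R)^4)$. This observation both explains the vanishing quadratic term and provides a convenient consistency check on the computed quartic coefficients.
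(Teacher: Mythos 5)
Your proposal is correct and follows essentially the same route as the paper: Taylor-expanding the closed forms of \Cref{prop:analytic_CRB} around $D/R=0$, verifying that the quadratic terms cancel in $\eta-\beta^2$ so that the leading term is quartic with coefficients $\alpha/11520$ for $\alpha=4,8,1,7$, and setting $\eta\approx 1$ in the waveform term before substituting into \eqref{eq:CRB}. Your variance-based explanation of why the $O\pa{(D/R)^2}$ terms must cancel is a nice consistency check that the paper leaves implicit, but the underlying computation is identical.
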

\begin{proof}
    \iffullproof
    Developping the above expressions with Taylor series around $D/R=0$, one obtains 
    \begin{align*}
        \eta^{\text{PT,SIMO}} - \pa{\beta^{\text{PT,SIMO}}}^2 &= \frac{D^4}{2880 R^4} + \mathcal{O}\pa{\frac{D^6}{R^6}},\\
        \eta^{\text{PT,SIMO}} &= 1+\mathcal{O}\pa{\frac{D^2}{R^2}}.
    \end{align*}
    Under \Cref{ass:RgeqD}, the higher order terms are negligible, concluding the proof in the SIMO case. The proof is performed similarly in the MIMO case, leading to
    \begin{align*}
        \eta^{\text{PT,SIMO}} - \pa{\beta^{\text{PT,SIMO}}}^2 &= \frac{2D^4}{2880 R^4} + \mathcal{O}\pa{\frac{D^6}{R^6}},\\
        \eta^{\text{PT,MIMO}} &= 1+\mathcal{O}\pa{\frac{D^2}{R^2}}.
    \end{align*}
    Again developping the above with Taylor series, the following is obtained
\begin{align*}
    \eta^{\text{ET,SIMO}} - \pa{\beta^{\text{ET,SIMO}}}^2 &= \frac{D^4}{11520 R^4} + \mathcal{O}\pa{\frac{D^6}{R^6}},\\
    \eta^{\text{ET,SIMO}} &= 1+\mathcal{O}\pa{\frac{D^2}{R^2}},\\
    \eta^{\text{ET,MIMO}} - \pa{\beta^{\text{ET,MIMO}}}^2 &= \frac{7D^4}{11520 R^4} + \mathcal{O}\pa{\frac{D^6}{R^6}},\\
    \eta^{\text{ET,MIMO}} &= 1+\mathcal{O}\pa{\frac{D^2}{R^2}}.
\end{align*}
\else
    The corollary is obtained by developping the expressions of \Cref{prop:analytic_CRB} for $\eta-\beta^2$ and $\eta$ alone, in Taylor series around $D/R=0$. Under \Cref{ass:RgeqD}, only the lowest power of $D/R$ is kept, concluding the proof.
    \fi
\end{proof}


\begin{figure}[h]
    \centering
    \includegraphics[width=\columnwidth]{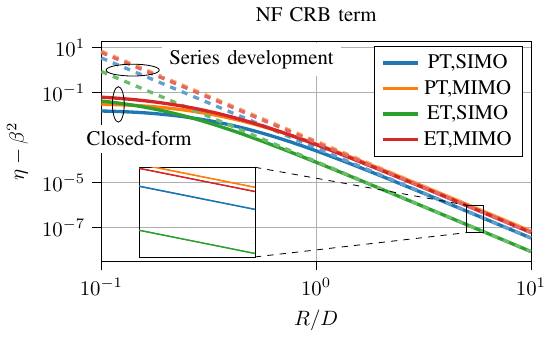}
    \vspace{-0.8cm}
    \caption{Analytic expressions of \Cref{prop:analytic_CRB} for the NF term $\eta-\beta^2$ of the range CRB, as a function of $\frac{R}{D}$. The series developments of \Cref{cor:series} are depicted with dashed lines. }
    \label{fig:NF_CRB_term}
\end{figure}
The above corollary enables to obtain simple closed-form expressions of the NF CRBs for $R\geq D$. In all cases (SIMO and MIMO, PT and ET), the NF effect evolves in $(D/R)^4$. Yet, a constant vertical offset differentiates the cases, which mirrors the difference in curvatures observed in \Cref{fig:NF_amb}. This offset depends on the value of the factor $\alpha$: in order to achieve the same CRB as the one of a MIMO array of size $D$ with a PT (i.e. the best case), the array size must be equal to $D\sqrt[4]{8/\alpha}$, leading to $1.03D$ for the MIMO ET for instance. Another insight provided by \Cref{cor:series} is the range at which the NF effect vanishes with respect to the waveform effect, delimiting the effective NF region. Denoting this effective NF range as $R_{\text{NF,eff}}$, it is obtained when the two terms of the CRB denominator are equal, leading to
\begin{align}
    R_{\text{NF,eff}} = D \sqrt{\frac{f_c+f_M}{B_{\text{RMS}}}}\sqrt[4]{\frac{\alpha}{11520}}.
    \label{eq:R_trans}
\end{align}
These ranges are highlighted in \Cref{fig:CRB}, showing their ability to predict the region in which NF gains are significant.

\section{Conclusion}
This paper provides analytic performance bounds for the multi-antenna range estimation of PTs and ETs. The considered performance metrics are the ambiguity function and the CRB. Both metrics reveal a cooperation between the NF phase information and the waveform delay information. For the considered settings, the obtained performance bounds have particularly simple forms from which the parameter impact can be evaluated. Among others, this reveals that the NF CRBs improve with the RMS value of the propagation distance derivatives. Moreover, an expression of the range at which the NF effect vanishes with respect to the waveform one is provided. Future works include the study of other antenna configurations, and performance bounds for the problem of range and angle estimation. 

\bibliographystyle{ieeetr}
\bibliography{main}

\begin{thebibliography}{10}

\bibitem{begriche2014exact}
Y.~Begriche, M.~Thameri, and K.~Abed-Meraim, ``Exact conditional and
  unconditional {Cram{\'e}r}--{Rao} bounds for near field localization,'' {\em
  Digital Signal Processing}, vol.~31, pp.~45--58, 2014.

\bibitem{el2010conditional}
M.~N. El~Korso, R.~Boyer, A.~Renaux, and S.~Marcos, ``Conditional and
  unconditional {Cram{\'e}r}--{Rao} bounds for near-field source
  localization,'' {\em IEEE Transactions on Signal Processing}, vol.~58, no.~5,
  pp.~2901--2907, 2010.

\bibitem{chen2023cramer}
A.~Chen, L.~Chen, Y.~Chen, C.~You, G.~Wei, and F.~R. Yu, ``Cram{\'e}r-{Rao}
  bounds of near-field positioning based on electromagnetic propagation
  model,'' {\em IEEE Transactions on Vehicular Technology}, 2023.

\bibitem{hua2023near}
H.~Hua, J.~Xu, and Y.~C. Eldar, ``Near-field {3D} localization via {MIMO}
  radar: {Cramér}-{Rao} bound analysis and estimator design,'' {\em arXiv
  preprint arXiv:2308.16130}, 2023.

\bibitem{sakhnini2022near}
A.~Sakhnini, S.~De~Bast, M.~Guenach, A.~Bourdoux, H.~Sahli, and S.~Pollin,
  ``Near-field coherent radar sensing using a massive {MIMO} communication
  testbed,'' {\em IEEE Transactions on Wireless Communications}, vol.~21,
  no.~8, pp.~6256--6270, 2022.

\bibitem{wang2024cramer}
H.~Wang, Z.~Xiao, and Y.~Zeng, ``Cram{\'e}r-{Rao} bounds for near-field sensing
  with extremely large-scale {MIMO},'' {\em IEEE Transactions on Signal
  Processing}, 2024.

\bibitem{moulin2024near}
F.~De~Saint~Moulin, G.~Thiran, C.~Craeye, L.~Vandendorpe, and C.~Oestges,
  ``Near-field {EM}-based multistatic radar range estimation,'' {\em arXiv
  preprint arXiv:2403.09258}, 2024.

\bibitem{xu2021transmit}
F.~Xu, S.~A. Vorobyov, and F.~Yang, ``Transmit beamspace {DDMA} based
  automotive {MIMO} radar,'' {\em IEEE Transactions on Vehicular Technology},
  vol.~71, no.~2, pp.~1669--1684, 2021.

\bibitem{lu2021communicating}
H.~Lu and Y.~Zeng, ``Communicating with extremely large-scale array/surface:
  Unified modeling and performance analysis,'' {\em IEEE Transactions on
  Wireless Communications}, vol.~21, no.~6, pp.~4039--4053, 2021.

\bibitem{bjornson2021primer}
E.~Bj{\"o}rnson, {\"O}.~T. Demir, and L.~Sanguinetti, ``A primer on near-field
  beamforming for arrays and reconfigurable intelligent surfaces,'' in {\em
  2021 55th Asilomar Conference on Signals, Systems, and Computers},
  pp.~105--112, IEEE, 2021.

\bibitem{cui2022channel}
M.~Cui and L.~Dai, ``Channel estimation for extremely large-scale {MIMO}:
  Far-field or near-field?,'' {\em IEEE Transactions on Communications},
  vol.~70, no.~4, pp.~2663--2677, 2022.

\bibitem{bernstein2009matrix}
D.~S. Bernstein, {\em Matrix mathematics: theory, facts, and formulas}.
\newblock Princeton university press, 2009.

\end{thebibliography}



\end{document}